\def\ps@headings{%
\def\@oddhead{\mbox{}\scriptsize\rightmark \hfil \thepage}%
\def\@evenhead{\scriptsize\thepage \hfil \leftmark\mbox{}}%
\def\@oddfoot{}%
\def\@evenfoot{}}
\newcommand{\F}{\mathbf{F}}
\newcommand{\C}{\mathcal{C}}
\newcommand{\N}{\mathcal{N}}
\newtheorem{theorem}{\textbf{Theorem}}
\newtheorem{lemma}[theorem]{\textbf{Lemma}}
\newtheorem{definition}[theorem]{\textbf{Definition}}
\newcommand{\nix}[1]{}
\begin{document}
\title{Encoding of Network Protection Codes Against  Link and Node Failures Over Finite Fields}
\author{
\authorblockN{Salah A. Aly~~~~~ and~~~~ Ahmed E. Kamal\\}
\authorblockA{Department of Electrical and Computer Engineering\\ Iowa State University, Ames, IA 50011, USA\\ Email: \{salah,kamal\}@iastate.edu}
 }
 
 \maketitle

\begin{abstract}
Link and node failures  are common two fundamental problems that affect operational networks.
Hence, protection of communication networks  is essential to increase their reliability, performance, and operations. Much research work has been done to protect against link and node failures,  and to provide reliable solutions based on pre-defined provision or dynamic restoration of the domain. In this paper we  develop network protection strategies  against  multiple link   failures using network coding and joint capacities.    In these strategies, the source nodes apply network coding for their transmitted data to provide backup copies for recovery at the receivers' nodes. Such techniques can be applied to optical, IP, and mesh networks. The encoding operations of protection codes are defined over finite fields. Furthermore, the normalized capacity of the communication network is given by $(n-t)/n$ in case of  $t$ link failures. In addition, a bound on the minimum required field size is derived.

\end{abstract}

\section{Introduction}\label{sec:intro}
With the increase in the capacity of backbone networks, the failure of
a single link or node can result in the loss of enormous amounts of
information, which may lead to catastrophes, or at least loss of
revenue. Network connections are therefore provisioned with the property that they can survive such failures, and hence several techniques  have been introduced in the literature. Such
techniques either add extra resources, or reserve some of the
available network resources as backup circuits, just for the sake of
recovery from failures.
Recovery from failures is also required to be agile in order to
minimize the network outage time.
This recovery usually involves two steps: fault diagnosis and
location, and rerouting connections.
Hence, the optimal network survivability problem is
a multi-objective problem in terms of resource efficiency, operation cost, and agility~\cite{zeng07}.

In network survivability, the four different types of failures that might affect network operations are\cite{somani06,zhou00}: \begin{inparaenum} \item  link failure,
\item node failure, \item shared risk link group (SRLG) failure, and
    \item network control system failure.
  \end{inparaenum}
Henceforth, one needs to design network protection strategies against
these types of failures.  Although the common frequent failures are link
failures, node failures sometimes happen due to burned swritch/router,
fire, or any other hardware damage. In addition, the failure might be due to
network maintenance.

 Network coding allows the intermediate nodes not only to forward packets using network
scheduling algorithms, but also encode/decode them using algebraic
primitive operations, see~\cite{ahlswede00,fragouli06,soljanin07,yeung06}
and the references therein. As an application of network coding, data loss because of failures in communication links can be detected and recovered if the sources are allowed to perform network coding operations.

Recently,  network protection strategies against multiple link failures using
network coding and reduced capacities are proposed in~\cite{aly08i, kamal07a}. In this paper, we provide a new technique for protecting network failures
using \emph{protection codes} and \emph{reduced capacity} in which the encoding operations are defined over finite fields. This technique
can be deployed at an overlay layer in optical mesh networks, in which
detecting failure is an essential task. The benefits of this approach are
that:
\begin{compactenum}[i)]
\item
It allows receivers to recover the lost data without contacting a
third parity or main domain server.
\item
It has less computational complexity and does not require adding
extra paths.
\item
All $n$ disjoint paths have full capacity except at $t$ paths in case of
protecting against $t$ link  failures.
\end{compactenum}

\goodbreak
This paper is organized as follows.  In Sections~\ref{sec:networkmodel} and \ref{sec:terminolgoy} we present the
network model and problem definition. In Section~\ref{sec:Tfilures} we provide network protections against $t$ link  failures. We present differentiated distributed capacities in Section~\ref{sec:distributedcapacities}, and demonstrate analysis of protection codes  in Section~\ref{sec:analysis}. Finally, Bounds on the finite field size is proved in Section~\ref{sec:boundfeilds}, and the paper is concluded in Section~\ref{sec:conclusion}.

\section{Network Model and Assumptions}\label{sec:networkmodel}

In this section we introduce the network model and provide the needed
assumptions.  The main hypothesis of this network model can be stated as
follows.

\begin{figure}[t]
 \begin{center} 
  \includegraphics[height=4.5cm,width=7.2cm]{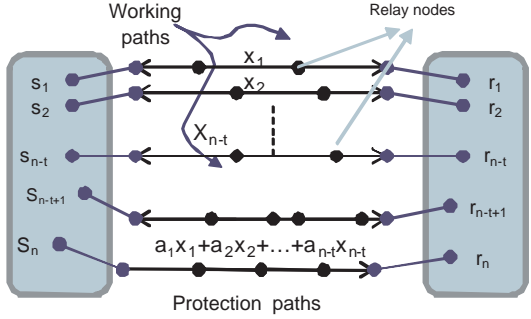}
  \caption{Network protection against a single path failure using reduced capacity and network coding. One path out of  $n$ primary paths  carries encoded data. The black points represent various other relay nodes}\label{fig:npaths}
\end{center}
\end{figure}

\begin{compactenum}[i)]
\item Let $\N$ be a network represented by an abstract graph
    $G=(\textbf{V},E)$, where $\textbf{V}$ is the set of nodes and $E$
    be   set of undirected edges. Let $S$ and $R$ are   sets of independent sources
    and destinations, respectively. The set $\textbf{V}=V\cup S \cup
    R$ contains the relay nodes, sources, and destinations. Assume for simplicity that $|S|=|R|=n$,
    hence the set of sources is equal to the set of receivers.

\item The node can be a router, switch, or an end terminal depending on
    the network model $\N$ and the transmission layer.

\item $L$ is a set of links $L=\{L_1,L_2,\ldots,L_n\}$ carrying the data
    from the sources to the receivers as shown in Fig.~\ref{fig:npaths}.
    All connections have the same bandwidth, otherwise a connection with
    high bandwidth can be divided into multiple connections, each of which
    has a unit capacity. There are exactly $n$ connections. For
    simplicity, we assume that the number of sources is less than or equal
    to the number of links. A sender with a high capacity can divide its
    capacity into multiple unit capacity, each of which has its own link. Put differently, \begin{eqnarray}
    \{(s_i,w_{1i}),(w_{1i},w_{2i}),\ldots,(w_{(\lambda)i},r_i) \},\end{eqnarray} where
    $1\leq i\leq n$ and $(w_{(j-1)i},w_{ji}) \in E$, for some integer $\lambda \geq 1$.
Hence
we have $|S|=|R|=|L|=n$. The n connection paths are pairwise link
disjoint.

\item The data from all sources are sent in cycles. Each cycle has a
    number of time slots $n$. Hence $t_j^\delta$ is a value at round time
    slot $j$ in cycle $\delta$.

\item The failure on a link $L_i$ may happen due to the network
    circumstance such as a link replacement, overhead, etc. We
    assume that the receiver is able to detect a failure and our
    protection strategy is able to recover it.

\item In this model $\N$, consider only a single link failure, it is
    sufficient to apply the encoding and decoding operation over a finite
    field with two elements, we denote it $\F_2=\{0,1\}$.
\end{compactenum}

\section{Problem Setup and Terminology}\label{sec:terminolgoy}

We assume that there is a set of n connections that need to be protected
with $\%100$ guaranteed against  single and multiple link failures. We assume that all
connections have the same bandwidth, and each link (one hop or circuit) has the same bandwidth as a path.

Every sender $s_i$ prepares a packet \emph{$packet_{s_i \rightarrow r_i}$}
 to send to a receiver $r_i$. The packet contains the sender's ID, data
$x_i^\ell$, and a round time for every cycle $t^\ell_\delta$ for some
integers $\delta$ and $\ell$. There are  two types of packets:
\begin{compactenum}[i)]
\item {\bf Plain Packets:} Packets sent without coding, in which the sender does not need
    to perform any coding operations. For example, in case of packets
    sent without coding, the sender $s_i$ sends the following packet
    to the receiver $r_i$.
\begin{eqnarray}
packet_{s_i \rightarrow r_i}:=(ID_{s_i},x_i^\ell,t_\delta^\ell)
\end{eqnarray}

\item  {\bf Encoding Packets:} Packets sent with encoded data, in which the sender $s_j$
    sends other sender's data. In this case, the sender $s_j$ sends the following packet to receiver
$r_j$:
\begin{eqnarray}
packet_{s_j \rightarrow r_j}:=(ID_{s_j},\sum_{i=1,j\neq i}^n \alpha_i x_i^\ell,t^\ell_\delta),
\end{eqnarray}
where $\alpha_i \in \F_q$.
\end{compactenum} In either case the sender has a full capacity in the
connection link $L_i$.

\begin{definition}\label{def:capacitylink}
The capacity of a connecting link $L_i$ between $s_i$ and $r_i$ is
defined by \begin{eqnarray}
c_i=\left\{
      \begin{array}{ll}
        1, & \hbox{$L_i$ has \emph{active signals};} \\
        0, & \hbox{otherwise.}
      \end{array}
    \right.
\end{eqnarray}
And the total capacity is given by the summation of all link
capacities. What we mean by an \emph{active}  link is that the receiver is able to receiver un-encoded signals/messages throughout this link and process them.
\end{definition}

Clearly, if all links are active then the total capacity is $n$ and
normalized capacity is $1$. In general the normalized capacity of the network
for the active and failed links is computed by
\begin{eqnarray}
C_\N=\frac{1}{n}\sum_{i=1}^n c_i.
\end{eqnarray}

The following definition describes the \emph{working} and
\emph{protection} paths between two network switches as shown in
Fig.~\ref{fig:npaths}.

\begin{definition}
The \emph{working paths} on a network with $n$ connection paths carry
un-encoded traffic under normal operations. The \emph{Protection paths} provide an
alternate backup path to carry encoded traffic. A
protection scheme ensures that data sent from the sources will reach the
receivers in case of failure incidences on the working paths.
\end{definition}


\section{NPS-T: Protecting Against $t$ Path Failures}\label{sec:Tfilures}
In this section we present a network protection strategy against $t$
failures in optical networks. Assume the same notations as shown in the
previous sections hold.  Assume also that the total number of failures are $t$ and they
happen at arbitrary $t$ links.

\begin{figure*}[t]
\begin{center}
\begin{eqnarray}\label{eq:tFailuresScheme}
\begin{array}{|c|cccccc|}
\hline
&1&2&\ldots&j&\ldots&m=\lceil n/t\rceil\\
\hline    \hline
s_1 \rightarrow r_1 & y_1&x_1^1 &\ldots&x_1^{j-1}&\ldots &x_{1}^{m-1} \\
s_2 \rightarrow r_2 &  y_2&  x_2^1 &\ldots&x_2^{j-1}&\ldots&x_2^{m-1}  \\
\vdots&\vdots&\vdots&\vdots&\vdots&\vdots&\vdots\\
s_{t} \rightarrow r_{t} &  y_{t}&  x_{t}^1 &\ldots&x_{t}^{j-1}&\ldots&x_{t}^{m-1}  \\
\!\! s_{t+1} \! \rightarrow \!\! r_{t+1} \!\! &\!\! x_{t+1}^1\!&\! y_{t+1}&\ldots&x_{2t+1}^3&\! \ldots&\! x_{2t+1}^{m-1} \!\! \\
\vdots&\vdots&\vdots&\vdots&\vdots&\vdots&\vdots\\
s_{2t} \rightarrow r_{2t}&x_{2t}^1&y_{2t}&\ldots& x_{2t}^3&\ldots&x_{2t}^{m-1}  \\
\vdots\ddots&\vdots\ddots&\vdots\ddots&\vdots\ddots&\vdots\ddots&\vdots\ddots&\vdots\ddots\\
\!\! s_{jt+\ell}\!\! \rightarrow \!\!r_{jt+\ell}\!\!&\!\!x_{jt+\ell}^1\!&\!x_{jt+\ell}^2&\ldots&\! y_{jt+\ell}^3& \ldots&\! x_{jt+\ell}^{m-1} \!\! \\
\vdots\ddots&\vdots\ddots&\vdots\ddots&\vdots\ddots&\vdots\ddots&\vdots\ddots&\vdots\ddots\\
\!\! s_{t(m-1)+1} \rightarrow r_{t(m-1)+1}&x_{t(m-1)+1}^1&x_{t(m-1)+1}^2& \ldots&x_{t(m-1)+1}^j&\! \ldots&\! y_{t(m-1)+1}  \\
\vdots&\vdots&\vdots&\vdots&\vdots&\vdots&\vdots\\
s_{mt} \rightarrow r_{mt} & x_{mt}^1&x_{mt}^2&\ldots&x_{mt}^j&\ldots&y_{mt}\\
\vdots\ddots&\vdots\ddots&\vdots\ddots&\vdots\ddots&\vdots\ddots&\vdots\ddots&\vdots\ddots\\
\hline
\end{array}
\end{eqnarray}
\caption{The encoding Scheme of $t$ link failures. $m=\lceil n/t\rceil$, $1 \leq j \leq m$ and $1 \leq \ell \leq t$. $t$ out of the $n$ connections carry encoded data. The coefficients are chosen over $\F_q$, for $q \geq n-t+1$.}
\end{center}
\end{figure*}

Let $m=\lceil n/t\rceil$, hence we have $m$ rounds per cycle. The encoding operations of NPS-T against $t$ failures are shown in Scheme~(\ref{eq:tFailuresScheme}).
We can see that $y_\ell$ in general  is given by

\begin{eqnarray}\label{eq:y_NPS-T}
y_\ell=\sum_{i=1}^{(j-1)t} a_i^\ell x_i^{j-1} + \sum_{i=jt+1}^n a_i^\ell x_i^j  \nonumber \\ \mbox{   for  } (j-1)t+1 \leq \ell \leq jt,  1 \leq j \leq n.
\end{eqnarray}

The advantages of NPS-T approach is that
\begin{itemize}
\item

The data is encoded and decoded online, and it will be sent and
received in different rounds. Once the receivers detect failures, they
are able to obtain a copy of the lost data immediately without delay
by querying the neighboring nodes with unbroken working paths.

\item The recovery is assured with $\%100$. Since $t$ paths will
    carry encoded data, up to $t$ failures can be recovered.

\item Using this strategy, no extra paths are needed. This will make
    this approach more suitable for applications, in which adding
    extra paths is not allowed.

\item Since in  real case scenarios, the number of failures is     very small in comparison to the number of working paths, the NPS-T performs well.

\item The encoding  operations are linear, and the
    coefficients of the variables $x_i^j$ are taken from a finite
    field with $q\geq n-t+1$ elements.
\end{itemize}

\begin{theorem}
Let $n$ be the total number of connections from sources to receivers. The
capacity of NPS-T strategy shown in Scheme~\ref{eq:tFailuresScheme} against $t$ path failures is given by
\begin{eqnarray}
\C_{\N}=(n-t)/(n)
\end{eqnarray}
\end{theorem}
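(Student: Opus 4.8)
The plan is to read $\C_{\N}$ directly off Definition~\ref{def:capacitylink} by counting how many of the $n$ link-disjoint connections carry un-encoded (plain) traffic in the encoding Scheme~\eqref{eq:tFailuresScheme}. Since $\C_{\N}=\frac1n\sum_{i=1}^n c_i$ with $c_i=1$ precisely when $L_i$ delivers directly usable un-encoded signals and $c_i=0$ otherwise, the whole statement reduces to showing that in NPS-T exactly $n-t$ of the $c_i$ equal $1$ (equivalently, exactly $t$ connections are reserved for protection symbols $y_\ell$).

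First I would pin down the combinatorial layout of Scheme~\eqref{eq:tFailuresScheme}. With $m=\lceil n/t\rceil$ rounds per cycle, in round $j$ the $t$ protection symbols $y_{(j-1)t+1},\dots,y_{jt}$ occupy precisely the block of connections with indices $(j-1)t+1,\dots,jt$, while every other connection transmits a plain data symbol $x_i^{j}$ (or $x_i^{j-1}$, according to which side of the active block it lies on, as recorded in~\eqref{eq:y_NPS-T}). I would verify that this block-diagonal pattern is consistent both with the displayed table and with the general formula~\eqref{eq:y_NPS-T} for $y_\ell$, so that in \emph{each} round exactly $t$ connections carry encoded traffic and the remaining $n-t$ carry plain traffic.

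With the layout fixed, the conclusion is immediate: for each of the $n-t$ plain connections $c_i=1$, and for each of the $t$ protection connections $c_i=0$, hence $\sum_{i=1}^n c_i=n-t$ and $\C_{\N}=(n-t)/n$. Because the split into $t$ protection connections versus $n-t$ working connections is the same in every round $j$, the value $(n-t)/n$ holds per round and therefore also as the cycle-averaged normalized capacity, so no averaging subtlety arises.

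The main point to be careful about — rather than a deep obstacle — is the boundary case $t\nmid n$, where $m=\lceil n/t\rceil$ and the last round is not ``full'': one must check that the scheme still dedicates a complete set of $t$ connections to protection symbols (even if some of them carry combinations of fewer than the generic number of data symbols), so that the count of working connections stays at exactly $n-t$ and the formula is not diluted. A secondary point is to apply the informal ``active signals'' wording of Definition~\ref{def:capacitylink} consistently, namely that a connection carrying some $y_\ell$ in the current round contributes $0$ irrespective of what it carries in other rounds; once this reading is fixed, the proof is just the one-line count above.
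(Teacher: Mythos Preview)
Your proposal is correct and follows the same counting argument the paper uses: in every round exactly $t$ connections carry encoded protection symbols and $n-t$ carry plain data, so the normalized capacity is $(n-t)/n$. The paper does not actually supply a proof for this theorem; the closest it comes is the one-line proof of the analogous Lemma~\ref{lem:nps-t2} (``$t$ protection paths exist in every round, hence $n-t$ working paths are available throughout all $m$ rounds''), which your argument fleshes out more carefully---including the $t\nmid n$ boundary case the paper leaves implicit.
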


\begin{lemma}
The encoding  Scheme~(\ref{eq:tFailuresScheme}) is optimal in terms of max capacity.
\end{lemma}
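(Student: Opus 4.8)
The plan is to prove a matching converse to the preceding Theorem: every protection scheme on $n$ pairwise link-disjoint connections that guarantees $100\%$ recovery from an arbitrary set of $t$ failed links has normalized capacity at most $(n-t)/n$. Since the Theorem already exhibits NPS-T attaining $\C_{\N}=(n-t)/n$, this converse shows that NPS-T meets the ceiling and is therefore optimal. Throughout I would assume $1\le t\le n$ (the claim is vacuous otherwise) and argue over one transmission cycle of $m=\lceil n/t\rceil$ rounds.

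The first step is bookkeeping. In one cycle there are $nm$ transmitted field symbols, namely $m$ rounds times $n$ connections, and each such symbol is a function over $\F_q$ of the source data. Let $N$ be the number of distinct fresh (un-encoded) source symbols the scheme delivers per cycle, so that, consistently with Definition~\ref{def:capacitylink}, the normalized capacity equals the time-averaged working fraction $\C_{\N}=N/(nm)$. A failure on a link disables that connection for the whole cycle, so any pattern of $t$ failures deletes $t$ full columns, i.e.\ $tm$ of the $nm$ transmitted symbols, leaving $(n-t)m$ of them.

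The core step is a dimension (or entropy) count. All of the source data the scheme is obliged to deliver is precisely the $N$ fresh symbols, and every transmitted symbol is a function of them; hence the $\F_q$-span of the full symbol set has dimension exactly $N$. For $100\%$ recovery against a given pattern of $t$ failures, the $N$ fresh symbols must be reconstructible from the $(n-t)m$ surviving symbols alone, so the span of those survivors must already have dimension $N$ — but it has dimension at most $(n-t)m$. Therefore $N\le (n-t)m$, whence $\C_{\N}=N/(nm)\le (n-t)/n$ for every valid scheme. For NPS-T with $t\mid n$ one has $m=n/t$ and each connection devotes exactly one of its $m$ slots to an encoded combination, so $N=n(m-1)=(n-t)m$ and the bound is met with equality, establishing optimality in that case. (For possibly nonlinear codes the same conclusion follows on replacing ``span/dimension'' by ``joint entropy'', each transmitted symbol contributing at most $\log q$ bits.)

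The step I expect to be the main obstacle is the non-integral case $t\nmid n$: then $m=\lceil n/t\rceil$ strictly exceeds $n/t$, the last block of Scheme~(\ref{eq:tFailuresScheme}) has fewer than $t$ rows of encoded data, and the clean identity $N=(n-t)m$ no longer holds, so the claim $\C_{\N}=(n-t)/n$ has to be read round-by-round (each round still sends $n$ symbols and loses $t$ of them to the failures) rather than as a cycle average. Making that rigorous needs a careful round-level version of the dimension count, together with an explicit statement that a link failure disables all of that connection's slots so that the erased set is genuinely a union of whole columns. A secondary housekeeping point is to fix, per Definition~\ref{def:capacitylink}, that a protection slot contributes $0$ to $\sum_i c_i$ even though it is essential for recovery, so that the rate achieved by NPS-T and the upper bound just proved are literally the same quantity.
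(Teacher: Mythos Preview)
Your argument is correct and substantially more rigorous than what the paper offers. The paper's justification following the lemma is a single informal sentence: ``One can not find a better encoding scheme against $t$ link failures rather than providing one protection path against one failure. Indeed $t$ protection paths are used to protect $t$ link failures.'' No dimension count, no entropy bound, no explicit converse---just the heuristic that one redundant path is needed per tolerated failure.

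Your approach is genuinely different and stronger. You set up a Singleton-type bound: $N$ independent source symbols must be recoverable from the $(n-t)m$ symbols surviving any $t$-column erasure, forcing $N\le (n-t)m$ and hence $\C_{\N}\le (n-t)/n$. This actually \emph{proves} the converse the paper merely asserts, works for nonlinear schemes via the entropy variant you mention, and makes transparent why the bound is tight when $t\mid n$. The paper's one-liner buys brevity but leaves the optimality claim unjustified; your argument buys an honest proof at the cost of a paragraph. Your caveat about the $t\nmid n$ case is well taken and is a genuine loose end in the paper's own statement of the capacity theorem, not a defect in your converse.
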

One can not find a better encoding scheme against $t$ link failures  rather than providing   one protection path against one failure. Indeed $t$ protection paths are used to protect $t$ link failures and this is shown in Scheme~(\ref{eq:tFailuresScheme}).


\subsection{Encoding Operations}
Assume that each connection path
$L_i$ (L) has a unit capacity from a source  $s_i$ (S) to a
receiver $r_i$ (R). The data sent from the sources S to the receivers R
is transmitted in rounds.
Under NPS-T, in every round $n-t$ paths are used to carry new data
($x_i^j$), and $t$ paths are used to carry protected data units.
there are $t$ protection paths.
Therefore, to treat all connections fairly,
there will be $ n/t$ rounds in a cycle, and in each round the
capacity is given by n-t.

We consider the case in which all symbols $x_i^j$  belong to the same round.
The first t sources transmit the first encoded data units
$y_1,y_2,\ldots,y_{t}$, and
in the second round, the next $t$ sources
transmit $y_{t+1},y_{t+2},\ldots,y_{2t}$, and so
on. All sources $S$ and receivers $R$ must keep
track of the round numbers. Let $ID_{s_i}$ and $x_{s_i}$ be the ID and
data initiated by the source $s_i$. Assume the round time $j$ in cycle
$\delta$ is given by $t^{j}_{\delta}$. Then the source $s_i$ will send
$packet_{s_i}$ on the working path which includes
\begin{eqnarray}
Packet_{s_i}=(ID_{s_i}, x_{i}^\ell, t^\ell_\delta)
\end{eqnarray}
Also, the source $s_j$, that transmits on a protection path, will
send a packet $packet_{s_j}$:
\begin{eqnarray}
Packet_{s_j}=(ID_{s_j}, y_j, t^\ell_\delta),
\end{eqnarray}
where $y_k$ is defined in~(\ref{eq:y_NPS-T}). Hence the protection
paths are used to protect the data transmitted in round $\ell$, which
are included in the $x^l_i$ data units.
 So, we have a system of $t$ independent equations at each round time
 that will be  used to recover at most $t$ unknown variables.

The strategy NPS-T is a generalization of protecting against a single path failure shown in the previous section in which $t$
protection paths are used instead of one protection path in case of
one failure. We also notice that most of the network operations suffer
from one and two path failures~\cite{zhou00,somani06}.

\subsection{Proper Coefficients Selection}
One way to select the coefficients $a_j^\ell$'s in each round such that we
have a system of $t$ linearly independent equations is by using the matrix H shown in~(\ref{bch:parity}). Let $q$ be the order of a finite field, and
$\alpha$ be the root of unity. Then we can use this matrix   to define the coefficients of the senders as:
\begin{eqnarray}\label{bch:parity} H =\left[
\begin{array}{ccccc}1&1&1&\ldots&1\\1 &\alpha &\alpha^2 &\cdots &\alpha^{n-1}\\1
&\alpha^2 &\alpha^4 &\cdots &\alpha^{2(n-1)}\\\vdots& \vdots &\vdots
&\ddots &\vdots\\1 &\alpha^{t-1} &\alpha^{2(t-1)} &\cdots
&\alpha^{(t-1)(n-1)}\end{array}\right].\end{eqnarray}
We have the following assumptions about the encoding operations.
\begin{compactenum}
\item  Clearly if we have one failure $t=1$, then all coefficients will be
    one. The first sender will always choose the unit value.

\item  If we assume $t$ failures, then the $y_1,y_2,\ldots,y_t$ equations
    are written as:
\begin{eqnarray}
y_1&=&\sum_{i=t+1}^nx_i^1, ~~~~~
y_2=\sum_{i=t+1}^n \alpha^{(i-1) }x_i^2,
\\
\label{eq:tcofficients}
y_j&=&\sum_{i=t+1}^n \alpha^{i(j-1) \mod (q-1)}x_i^\ell,
\end{eqnarray}
\end{compactenum}

This equation gives the general theme to choose the coefficients at any particular round in any cycle. However, the encoded data $y_i$'s are defined as shown in Equation~(\ref{eq:tcofficients}). In other words, for the first round in cycle one, the coefficients of the plain data $x_1,x_2,\ldots,x_t$ are set to zero.

\subsection{Decoding Operations} We know that the coefficients
$a_1^\ell,a_2^\ell,\ldots,a_n^\ell$ are elements of a finite field, hence
the inverses of these elements exist and they are unique.  Once
a node fails which causes $t$ data units to be lost, and once the
receivers receive $t$ linearly independent equations, they can
linearly solve these equations to obtain the unknown t data units.
At one
particular cycle j, we have three cases for the failures
\begin{compactenum}[i)]
\item All t link failures happened in the working paths, i.e. the working
    paths have failed to convey the messages $x_i^\ell$ in round $\ell$.
In this case, $n-t$ equations will be received, $t$ of which are linear
combinations of $n-t$ data units, and the remaining $n-2t$ are explicit
$x_i$ data units, for a total of $n-t$ equations in $n-t$ data units.
    In this case any t
    equations (packets) of the t encoded packets can be used to recover
    the lost data.

    \item All t link failures happened in the protection paths. In this case, the
exact remaining n-t packets are working paths and they do not experience
any failures. Therefore, no recovery operations are needed.
\item The third case is that the failure might happen in some working and
    protection paths simultaneously in one particular round in a cycle.
    The recover can be done using any t protection paths as shown in case
    i.
\end{compactenum}

\section{Bounds on the Finite Field Size, $\F_q$}\label{sec:boundfeilds}
In this section we derive lower and upper bound on the alphabet size
required  for the encoding and decoding operations.  In the proposed
schemes we assume that direction connections exist between the senders and
receivers, which the information can be exchanged with neglected cost.

The first result shows that the alphabet size required must be greater
than the number of connections that carry unencoded data.

\begin{theorem}
Let $n$ be the number of connections in the network model $\N$, then the
receivers are able to decode the encoded messages over $\F_q$ and will
recover from $t\geq 2$ path failures if
\begin{eqnarray}
q\geq n-t+1.
\end{eqnarray}
Also, if $q=p^r$, then $r \leq \lceil \log_p(n+1) \rceil$. The binary field is sufficient in case of a single path failure.
\end{theorem}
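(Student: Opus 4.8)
The plan is to reduce the recovery task, round by round, to inverting a $t\times t$ submatrix of the coefficient matrix $H$ of~(\ref{bch:parity}), and to show that $q\ge n-t+1$ is exactly the condition that makes every such submatrix nonsingular. First I would fix one round and argue, as in the three decoding cases above, that the worst case is when all $t$ failed links are working paths: then exactly $t$ data symbols $x_{i_1}^{\ell},\dots,x_{i_t}^{\ell}$ are lost while the $t$ protection packets $y_{(\ell-1)t+1},\dots,y_{\ell t}$ of that round arrive intact (if a protection path also fails, fewer than $t$ unknowns remain, which is strictly easier). Subtracting the contributions of the known (correctly received) data symbols in~(\ref{eq:y_NPS-T}) from the received protection values, the receivers obtain a linear system $A\,\mathbf{x}=\mathbf{b}$ in the $t$ unknowns $x_{i_1}^{\ell},\dots,x_{i_t}^{\ell}$, where $A$ is formed from the $t$ rows of~(\ref{bch:parity}) by keeping the columns indexed by $i_1,\dots,i_t$; recovery succeeds for every failure pattern if and only if every such $A$ is invertible.

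Next I would exploit the Vandermonde structure of $H$. Its $(j,i)$ entry is $\alpha^{(j-1)(i-1)}$, so any $t$ columns $i_1<\dots<i_t$ give $\det A=\prod_{1\le a<b\le t}\bigl(\alpha^{i_b-1}-\alpha^{i_a-1}\bigr)$, which is nonzero precisely when the nodes $\alpha^{i_1-1},\dots,\alpha^{i_t-1}$ are pairwise distinct. In each round only $n-t$ of the $n$ paths carry unencoded data, and after relabeling these within the round their Vandermonde nodes can be taken to be the $n-t$ powers $\alpha^{0},\alpha^{1},\dots,\alpha^{n-t-1}$; these are pairwise distinct as soon as $\alpha$ has multiplicative order at least $n-t$ in $\F_q^{*}$, which is possible if and only if $q-1\ge n-t$, i.e.\ $q\ge n-t+1$. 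Taking $\alpha$ to be a primitive element of $\F_q$ then makes every $A$ a Vandermonde matrix with distinct nonzero nodes, hence invertible, and $\mathbf{x}=A^{-1}\mathbf{b}$ recovers the lost data units; this is consistent with the coefficient choice in~(\ref{eq:tcofficients}).

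For the last two assertions: to be safe for all rounds one may pick any prime power $q=p^{r}\ge n+1$ (so that $q-1\ge n\ge n-t$); the smallest such exponent is $r=\lceil\log_p(n+1)\rceil$ and no smaller $r$ yields $p^{r}\ge n+1$, which is the stated bound on $r$. When $t=1$ there is a single protection equation $y=\sum_{i\ne j}x_i^{\ell}$ with all coefficients equal to $1$ and at most one lost data unit, so the resulting system has the $1\times1$ matrix $[1]$ and $x_{\text{lost}}=y-\sum_{\text{known}}x_i^{\ell}$, an operation valid over $\F_2$; hence the binary field already suffices in the single-failure case. The step I expect to be the main obstacle is the clean justification that in a given round only $n-t$ distinct Vandermonde nodes are ever needed (equivalently, that the raw exponents $i(j-1)\bmod(q-1)$ of~(\ref{eq:tcofficients}) are pairwise distinct for each fixed $j$): this is immediate when $q-1$ is prime or when $\alpha$ is chosen of the minimal required order, but in general it requires a short $\gcd$ argument, and the cleanest way to close the gap is to insist that $\alpha$ be primitive and take $q$ slightly larger if necessary.
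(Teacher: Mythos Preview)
Your proposal follows essentially the same route as the paper's own proof: both reduce the recovery question to the coefficient matrix $H$ of~(\ref{bch:parity}), argue that after setting aside the $t$ protection columns one needs the remaining $n-t$ Vandermonde nodes to be pairwise distinct, and deduce $q\ge n-t+1$ from that; both handle the upper bound by taking $q\ge n+1$ and the single-failure case by observing that all-ones coefficients work over $\F_2$. Your version is in fact more explicit than the paper's---you actually write down the $t\times t$ system, compute the Vandermonde determinant, and tie distinctness to the multiplicative order of $\alpha$, whereas the paper states somewhat loosely that ``removing any $t$ out of the $n$ columns \dots\ will result to $n-t$ linearly independent columns'' without spelling out the determinant argument---and your closing caveat about the per-round relabelling and the $\gcd$ issue in~(\ref{eq:tcofficients}) is a point the paper simply glosses over.
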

\begin{proof}
We will prove the lower bound by construction. Assume a NPS-T at one
particular time $t_\delta^\ell$ in the round  $\ell$ in a certain cycle
$\delta$.  The protection code of NPS-T against $t$ path failures is given in~\ref{bch:parity}.

Without loss of generality, the interpretation of
Scheme~(\ref{bch:parity}) is as follows:
\begin{compactenum}[i)]
\item The columns correspond to the senders $S$ and rows correspond to t
    encoded data $y_1,y_2,\ldots,y_t$.
\item The first row corresponds to $y_1$ if we assume the first
    round
    in
    cycle one. Furthermore, every row represents the coefficients of
    every senders at a particular round.
\item The column $i$ represents the coefficients of the sender $s_i$
    through all protection paths $L_1,L_2,\ldots,L_t$.
\item Any element $\alpha^i \in F_q$ appears once in a column and row,
    except in the follow column and first row, where all elements are one's.
    \item All columns (rows) are linearly independent.

\end{compactenum}

Due to the fact that the t failures might occur at any t working paths of
$L=\{l_1,L_2,\ldots,l_n\}$, then we can not predict the t protection paths
as well. This means that t out of the n columns do not participate in the
encoding coefficients, because t paths will carry encoded data. We notice
that removing any $t$ out of the $n$ columns in
Scheme~(\ref{bch:parity}) will result to $n-t$ linearly
independent columns. Therefore the smallest
    finite field that  satisfies this condition must have $n-t+1$
    elements.

  The upper bound comes from the case of no failures, hence $q \geq
    (n+1)$. Assume q is a prime power , then the result follows.
\end{proof}

if $q=2^r$, then in general the previous bound can be stated as
\begin{eqnarray}
n-t+1 \leq q \leq 2^{\lceil \log_2(n+1) \rceil}.
\end{eqnarray}
 The following result shows
the maximum admissible paths, which can suffer from failures, and the
decoding operations can be achieved successfully.

\begin{lemma}
Let $n$ and $t$ be the number of connections and failures in the network
model $\N$, then we have $t \leq \lfloor n/2  \rfloor$.
\end{lemma}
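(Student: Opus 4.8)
The plan is to argue that the number of protection paths $t$ cannot exceed the number of working paths $n-t$, since otherwise there is not enough un-encoded data in any given round for the $t$ encoded combinations to be decodable. First I would recall the structure of Scheme~(\ref{eq:tFailuresScheme}): in each round $t$ of the $n$ connections carry encoded data $y_\ell$, and the remaining $n-t$ connections carry the plain data units $x_i^j$ that appear inside those encoded combinations. For the recovery guarantee of NPS-T to hold, when all $t$ failures strike working paths (case~i) of the decoding discussion), the receivers must reconstruct the $t$ lost plain units from the $t$ received encoded equations; this forces the $t$ encoded combinations to be formed from at least $t$ distinct plain data units, i.e.\ $n-t \geq t$.

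The key steps, in order, are: (1) fix an arbitrary round $\ell$ and count the connections: $t$ carry $y$-values and $n-t$ carry $x$-values, which is consistent only if $n-t \geq 1$, and more sharply consistent with the recovery claim only if $n-t \geq t$; (2) invoke the decoding argument already given (the system of $t$ linearly independent equations must determine the $t$ unknowns, so the coefficient submatrix obtained by deleting the $t$ columns corresponding to failed/encoding paths from $H$ in~(\ref{bch:parity}) must have $n-t$ columns and rank $t$, which requires $n-t \geq t$); (3) rearrange $n-t \geq t$ to $n \geq 2t$, hence $t \leq n/2$; (4) since $t$ is an integer, conclude $t \leq \lfloor n/2 \rfloor$.

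The main obstacle is justifying step~(2) rigorously rather than by appeal to intuition: one must be precise that the worst-case failure pattern (all $t$ failures on working paths) is the binding constraint, and that in that case the receiver genuinely has only the $t$ encoded packets plus $n-2t$ untouched plain packets, so the $t$ unknowns must be solvable from exactly $t$ equations — which is impossible if those equations collectively involve fewer than $t$ independent unknowns, and there are only $n-t$ plain units available in the round to begin with. Once that counting is pinned down the inequality $n \geq 2t$ is immediate and the floor follows from integrality, so I expect the proof to be short.
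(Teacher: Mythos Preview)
Your proposal is correct and takes essentially the same approach as the paper: both argue that the number of protection paths $t$ cannot exceed the number of working paths $n-t$, giving $t \leq n-t$ and hence $t \leq \lfloor n/2 \rfloor$. The paper's proof is a single sentence stating exactly this fact, so your steps (1)--(4) simply flesh out what the paper leaves as an immediate observation.
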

\begin{proof}
The proof is a direct consequence and from the fact that the protection
paths must be less than or equal to  the number of working paths.
\end{proof}
This lemma shows that one can not provide protection paths better than
duplicating the number of working paths.


\section{Network Protection Using Distributed Capacities and Network Coding}\label{sec:distributedcapacities}

In this section we develop network protection strategy where some connection paths have high priorities (less bandwidth, high demand). Let $n$ be  the set of available connections (disjoint paths from sources to receivers). Let $m$ be the set of rounds in every  cycle. In the previous strategies (NPS-T) we assumed that all connection paths have the same priority demand and working capacities. This might be the real case scenario.  connections that carry applications with multimedia traffic have high priority than applications that carry data traffic. Therefore, it is required to design network protection strategies based on the traffic and sender priorities.

Consider that available working connections $n$ may use their bandwidth
assignments in asymmetric ways. Some connections are less demanding in
terms of bandwidth requirements than other connections that require full capacity frequently. Therefore connections with less demanding can transmit more
protection packets, while other connections demand more bandwidth, and can therefore transmit fewer protection packets throughout transmission rounds. Let $m$ be the number of rounds and $t_i^\delta$ be the time of transmission in a cycle $\delta$ at round $i$. For a particular cycle $i$, let $t$ be the number of protection paths against $t$ failures that might affect the working paths. We will design network protection strategy against $t$ arbitrary link failures (NPS-T2) as follows. Let the source $s_j$ sends $d_i$ data packets and $p_i$ protection packets such that $d_j+p_j=m$. Put differently:

\begin{eqnarray}
\sum_{i=1}^n (d_i+p_i)=nm
\end{eqnarray}
In general we do not assume that $d_i =d_j$ and $p_i=p_j$. NPS-T2 is described as shown in Scheme~\ref{eq:tfailures2}.

\begin{figure}
\begin{eqnarray}\label{eq:tfailures2}
\begin{array}{|c|ccccccc|}
\hline
& \multicolumn{7}{|c|}{\mbox{ round time cycle 1 }} \\ \hline
\hline
&1&2&3&4&\ldots&m-1&m\\
\hline    \hline
s_1 \rightarrow r_1 & y_1^1&x_1^1 &x_1^2&y_1^2&\ldots &y_1^{p_1}&x_{1}^{d_1} \\
s_2 \rightarrow r_2 &  x_2^1& y_2^1 &x_2^2&x_2^3&\ldots&x_2^{d_2} &y_2^{p_2} \\
\vdots&\vdots&\vdots&\vdots&\vdots&\vdots&\vdots&\vdots\\
s_{i} \rightarrow r_{i} &  y_{i}^1&  x_{i}^1 &x_{i}^2&y_{i}^2&\ldots&y_i^{p_i}  &x_{i}^{d_i}\\
\vdots&\vdots&\vdots&\vdots&\vdots&\vdots&\vdots&\vdots\\
s_{j} \rightarrow r_{j} &  x_{j}^1&  x_{j}^2 &y_{j}^1&x_{j}^3&\ldots&x_{j}^{d_j}&y_j^{p_j}  \\
\vdots&\vdots&\vdots&\vdots&\vdots&\vdots&\vdots&\vdots\\
s_n \rightarrow r_n & x_n^1&y_n^1&x_n^2&x_n^4&\ldots&y_{n}^{p_n}&x_n^{d_n}\\
\hline
\end{array}
\end{eqnarray}
\end{figure}
The encoded data $y_i^\ell$ is given by
\begin{eqnarray}
y_i^\ell =\sum_{k=1,y_k^\ell \neq y_j^\ell}^n x_k^\ell
\end{eqnarray}

We assume that the maximum number of failures that might occur in a particular cycle is $t$. Hence the number of protection paths (paths that carry encoded data) is $t$. The selection of the working and protection paths in every round is done using a priority  demanding function at the senders's side. It will also depend on the traffic type and service provided on these protection and working connections.

In Scheme~(\ref{eq:tfailures2}) every connection $i$ is used to carry $d_i$ unencoded data $x_i^1,x_i^2,\ldots,x_i^{d_i}$ (working paths) and $p_i$ encoded data $y_i^1,y_i^2,\ldots,y_i^{p_i}$ (protection paths) such that $d_i+p_i=m$.

\begin{lemma}\label{lem:nps-t2}
Let $t$ be the number of connection paths carrying encoded data in every round in NPS-T2, then the normalized network capacity $C_\N$ is given by
\begin{eqnarray}
(n-t)/n
\end{eqnarray}
\end{lemma}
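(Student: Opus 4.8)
The plan is to compute $C_\N$ the same way the capacity of NPS-T was obtained: by averaging the instantaneous normalized capacity over one complete cycle of $m$ rounds. The only new feature is that here the working/protection split $(d_i,p_i)$ varies with the connection index $i$, so instead of a per-connection count I will argue with a per-round count and then sum the two descriptions against each other.

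First I would fix an arbitrary cycle and examine a single round $\ell$ with $1\le \ell\le m$. By the construction of Scheme~(\ref{eq:tfailures2}), together with the standing assumption that at most $t$ failures may occur in the cycle so that $t$ paths are reserved as protection paths in \emph{each} round, exactly $t$ of the $n$ connections transmit an encoded symbol $y_i^\ell$ in round $\ell$ and the remaining $n-t$ connections transmit an unencoded symbol $x_i^\ell$. Applying Definition~\ref{def:capacitylink}, the $t$ links carrying encoded data have $c_i=0$ and the $n-t$ links carrying unencoded data have $c_i=1$, so the total capacity used in round $\ell$ is $\sum_{i=1}^n c_i=n-t$, and the normalized capacity of that round is $(n-t)/n$. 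Aggregating over the cycle, counting unencoded transmissions in two ways gives $\sum_{i=1}^n d_i=\sum_{\ell=1}^m (n-t)=m(n-t)$, which is consistent with $d_i+p_i=m$ and $\sum_{i=1}^n(d_i+p_i)=nm$ and forces $\sum_{i=1}^n p_i=mt$. Hence, since the normalized network capacity is the time-average over the cycle,
\[
C_\N=\frac{1}{m}\sum_{\ell=1}^m \frac{n-t}{n}=\frac{1}{nm}\sum_{i=1}^n d_i=\frac{m(n-t)}{nm}=\frac{n-t}{n},
\]
which is the claim.

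The step that requires the most care is the first one: justifying that every round contains exactly $t$ protection slots (not merely at most $t$ over the whole cycle) and that this reserved capacity is genuinely usable. This is where one must invoke the design of NPS-T2 — the priority/demand function that, in each round, assigns precisely $t$ connections to carry encoded data — together with the decoding guarantee that restricting the coefficient matrix $H$ of~(\ref{bch:parity}) to any round yields $t$ equations any of whose column-subsets of size $t$ are independent, exactly as in the argument showing $q\ge n-t+1$ suffices. Once that structural fact is in place, the capacity statement reduces to the elementary double count above, and Lemma~\ref{lem:nps-t2} coincides, round by round, with the capacity theorem already established for NPS-T.
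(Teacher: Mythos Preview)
Your argument is correct and follows the same idea as the paper's own proof, which is a one-line observation that $t$ protection paths in every round leave $n-t$ working paths, hence normalized capacity $(n-t)/n$. Your version simply expands this with the per-round application of Definition~\ref{def:capacitylink}, a double-counting consistency check on $\sum_i d_i$, and an explicit time-average, none of which the paper spells out but all of which are implicit in its one-sentence justification.
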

\begin{proof}
The proof is straight forward from the fact that $t$ protection paths exist in every round, hence $n-t$ working paths are available throughout all $m$ rounds.
\end{proof}

\section{Analysis of the Protection Codes Over $\F_q$}\label{sec:analysis}
We will prove correctness of the protection codes over $\F_q$.
Let $\F_q$ be a finite field with $q$ elements such that $q=p^r$ for some nonzero integer $r$  and prime $p$. We will drive a scheme to recover from any $m$ failures in the $n+m$ primary and protection paths. Assume $t$ be the number of failures in the primary paths. We have three cases

\begin{compactenum}[i)]
\item
All failures occur in the primary paths, $t=m$. In this case we need to establish a system of $t$ linearly independent equations in $t$ variables.

\item
$t$ failures occur in the primary paths and $m-t$ failures occur in the protection paths. In this case we need to establish a system of equations to recover the failures in the primary paths only.
\item All failures occur in the protection paths. No recovery process is needed in this case.

\end{compactenum}

We will show the encoding operation in case of directional connections from the senders to receivers.
consider the worst case scenario in which $m=t$. We can describe the encoding scheme for multiple link failures as shown in~(\ref{bch:parity}).

All $\alpha$'s powers are taken module the field size,  i.e. $\alpha^{ij \mod q=n+1}$. In other words, if $q\geq n+1$, then we have the encoding matrix

\begin{eqnarray}\label{eq:encodingscheme2}
\left[
\begin{array}{ccccc}1 &1 &1 &\cdots &1\\  \alpha &\alpha^2 &\alpha^3&\cdots &\alpha^{n}\\
\alpha^2 &\alpha^4 &\alpha^6 &\cdots &\alpha^{2(n)}\\
\vdots& \vdots &\vdots
&\ddots &\vdots\\ \alpha^{t-1} &\alpha^{2(t-1)}&\alpha^{3(t-1)} &\cdots
&\alpha^{(t-1)(n)}\end{array}\right]\end{eqnarray}

In this case we have $\alpha^{q-1}=1$, $q$ is a prime power.

The first column represents the coefficients of the encoding data at the first sender. Also, the first row represents the binary coefficients of all senders in case of a single link failure. Hence $\alpha^{i-1}$ column represents the coefficients of the encoding data at the $i$ sender for all $1\leq i \leq n-1$. 

In general for multiple $m=t$ failures, the encoding data in the j-th protection is given by
\begin{eqnarray}
y_{n+j}=\sum_{i=1}^n \alpha^{j(i-1) \mod q}  x_i,
\end{eqnarray}
for $1 \leq j \leq m$.

As a matter of fact, the square sub-matrix of $t$ columns of the encoding scheme $\ref{eq:encodingscheme2}$ is invertable (has a full rank) if and only if its determinant is not equal to zero~\cite{lint99}. We will show that for any $t$ arbitrary link failures, the receivers are able to form a system of $t$ linearly independent equations and recover the lost data.

\bigskip

\begin{lemma}\label{lemma:matrixfullrank}
If there are $t$ link failures in the primary paths, then the receivers are successfully able to recover from those failures  using $t$ protection paths.
\end{lemma}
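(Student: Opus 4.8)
The plan is to reduce the claim to a statement about the invertibility of square submatrices of the encoding matrix in~(\ref{eq:encodingscheme2}), i.e.\ to show that \emph{every} choice of $t$ columns yields a nonsingular $t \times t$ matrix over $\F_q$. First I would fix an arbitrary set of $t$ primary paths that fail, say those indexed by $i_1 < i_2 < \cdots < i_t \in \{1,\ldots,n\}$. The $m-t$ intact protection paths each deliver a linear equation of the form $y_{n+j} = \sum_{i=1}^n \alpha^{j(i-1)} x_i$; moving all the \emph{known} $x_i$ (those on intact primary paths) to the left-hand side, the receiver is left with $t$ equations in the $t$ unknowns $x_{i_1},\ldots,x_{i_t}$, whose coefficient matrix is the $t\times t$ submatrix of~(\ref{eq:encodingscheme2}) formed by columns $i_1,\ldots,i_t$. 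So the lemma follows once that submatrix is shown to be invertible.

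Next I would identify this submatrix as (a scaled) Vandermonde matrix. Its $(j,k)$ entry is $\alpha^{(j-1)(i_k-1)}$, which is exactly the Vandermonde matrix in the $t$ nodes $\beta_k := \alpha^{i_k - 1}$, $k = 1,\ldots,t$, possibly after factoring the nonzero scalar $\beta_k$ (or $\beta_k^{0}$) out of each column depending on the row indexing convention. The determinant is therefore $\bigl(\prod_k c_k\bigr)\prod_{1 \le a < b \le t} (\beta_b - \beta_a)$ up to a unit, which is nonzero provided the $\beta_k$ are pairwise distinct. Since $\alpha$ is a primitive element (a root of unity of order $q-1$), the powers $\alpha^0, \alpha^1, \ldots, \alpha^{q-2}$ are all distinct, so $\beta_a = \beta_b$ would force $i_a \equiv i_b \pmod{q-1}$; because $1 \le i_a, i_b \le n$ and the field-size bound $q \ge n-t+1$ (in the no-failure worst case $q \ge n+1$) guarantees the exponents $i_k-1$ lie in a range of length less than $q-1$, distinctness of the indices gives distinctness of the $\beta_k$. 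Hence the determinant is nonzero and the submatrix is invertible, so the receivers can solve the $t\times t$ system uniquely and recover $x_{i_1},\ldots,x_{i_t}$.

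The main obstacle I anticipate is bookkeeping about the modular reduction of exponents: the entries are $\alpha^{(j-1)(i_k-1) \bmod (q-1)}$, and one must be careful that reducing the \emph{product} of exponents mod $q-1$ does not collapse two distinct columns. The clean way around this is to observe that $\alpha^{(j-1)(i_k-1) \bmod (q-1)} = (\alpha^{i_k-1})^{j-1}$ as field elements regardless of reduction, so the matrix genuinely is Vandermonde in the $\beta_k$ and the only thing that matters is $\beta_a \ne \beta_b$ for $a \ne b$ — which is precisely where the field-size hypothesis $q \ge n-t+1$ (ensuring $i_k - 1$ ranges over fewer than $q-1$ consecutive values once the $t$ failed columns are taken out, and in the ambient matrix $q\ge n+1$ suffices outright) enters. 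I would close by noting that this argument is independent of \emph{which} $t$ columns fail, handling cases (i) and (iii) of the preceding discussion uniformly, while case (ii) with $m - t$ additional protection-path failures is covered because only $t$ of the surviving protection equations are needed and any $t$ of them form such an invertible Vandermonde system.
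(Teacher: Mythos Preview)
Your proposal is correct and follows essentially the same route as the paper: select the $t$ columns corresponding to the failed primary paths, recognize the resulting $t\times t$ coefficient matrix as a Vandermonde matrix in the nodes $\alpha^{i_k-1}$, and conclude invertibility from the product formula $\prod_{h>\ell}(\alpha^{j_h}-\alpha^{j_\ell})\neq 0$. Your treatment is in fact more careful than the paper's---you explicitly justify the distinctness of the Vandermonde nodes via the field-size hypothesis and handle the exponent-reduction bookkeeping, whereas the paper simply writes down the determinant formula and asserts it is nonzero.
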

\begin{proof}
Let \Big[$\begin{array}{ccccc}1&\alpha^{j_1}&\alpha^{2j_1}&\ldots&\alpha^{(t-1)j_1}\end{array}$  \Big] represent the any arbitrary column in the encoding scheme (\ref{eq:encodingscheme2}) indexed by the second element  $\alpha^{j_1}$. Choosing any $t$ arbitrary columns $\alpha^{j_1},\alpha^{j_2},\ldots,\alpha^{j_t}$ yield
\begin{eqnarray}\label{eq:matrix-t}
\left[\begin{array}{cccc}1&1&\ldots&1 \\ \alpha^{j_1}&\alpha^{j_2}&\ldots&\alpha^{j_t}\\
\alpha^{2j_1}&\alpha^{2j_2}&\ldots&\alpha^{2j_t} \\
\ldots&\ldots&\ldots&\ldots\\
\alpha^{(t-1)j_1}&\alpha^{(t-1)j_2}&\ldots&\alpha^{(t-1)j_t}\end{array}  \right]\end{eqnarray}
\end{proof}
Hence we have a system of $t$ equations in $t$ variables. Clearly, all elements in each row are different. Indeed this system has determinant given by the form~~\cite[Theorem 6.5.5]{lint99}
\begin{eqnarray}
\alpha^{j_1+j_2+j_3+\ldots+j_t}\prod _{h > \ell } \Big( \alpha^{j_h} -\alpha^{j_\ell}\Big) \neq 0,
\end{eqnarray}
which proves the result.

\bigskip

Now, we shall prove the general case that any $\mu \times \mu$ square sub-matrix of the matrix~(\ref{eq:encodingscheme2}) has a full rank.
Assume the square matrix is represented by
%
%

\begin{eqnarray}\label{eq:subencodingscheme}
B=\left[
\begin{array}{ccccc}
\alpha^{i_{1}j_{1}} &\alpha^{i_{1}j_{2}} &\cdots &\alpha^{i_{1}j_{\mu}}\\
\alpha^{i_2j_{1}}&\alpha^{i_2j_{2}} &\cdots &\alpha^{i_2j_{\mu}}\\
\vdots& \vdots
&\ddots &\vdots\\ \alpha^{i_{\mu}j_{ 1}} &\alpha^{i_{\mu}j_{2}} &\cdots
&\alpha^{i_{\mu}j_{\mu}} \end{array}\right]\end{eqnarray}

where $1\leq \mu \leq n$  and $\alpha^{i_{i'} j_{j'}} \in \F_q$.

\begin{lemma}
The sub-matrix $B$ described in (\ref{eq:subencodingscheme}) has a full rank.
\end{lemma}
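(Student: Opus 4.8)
The plan is to identify $B$ as a generalized Vandermonde matrix and push it as far as possible toward the ordinary Vandermonde determinant already invoked for Lemma~\ref{lemma:matrixfullrank}. Put $x_k=\alpha^{j_k}$ for $k=1,\dots,\mu$. Since $\alpha$ is a primitive element its order is $q-1\ge n$, so the distinct column exponents $j_1,\dots,j_\mu$ yield $\mu$ pairwise distinct nonzero elements $x_1,\dots,x_\mu\in\F_q$. The $(r,k)$ entry of $B$ is $\alpha^{i_{r}j_{k}}=(\alpha^{j_k})^{i_r}=x_k^{\,i_r}$, hence
\[
\det B=\det\big[x_k^{\,i_r}\big]_{1\le r,k\le\mu},
\]
a generalized Vandermonde determinant with bases $x_1,\dots,x_\mu$ and exponents $i_1<\cdots<i_\mu$ drawn from $\{0,1,\dots,t-1\}$. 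Full rank of $B$ is equivalent to $\det B\ne 0$.

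First I would settle the case in which the chosen rows carry consecutive exponents $\{i_1,\dots,i_\mu\}=\{0,1,\dots,\mu-1\}$: then $B$ is an ordinary Vandermonde matrix, so $\det B=\prod_{1\le k<l\le\mu}(x_l-x_k)\ne 0$ since the $x_k$ are distinct. This already covers the submatrix needed to recover from $t$ failures confined to the primary paths, where all $t$ rows of~(\ref{eq:encodingscheme2}) survive. For an arbitrary exponent set I would expand $\det B$ along its last column and induct on $\mu$: the minor obtained by deleting this column and the row of exponent $i_\mu$ is a generalized Vandermonde in $x_1,\dots,x_{\mu-1}$, nonzero by the inductive hypothesis, so $\det B$, viewed as a polynomial in $x_\mu$, has nonzero leading coefficient and is not identically zero. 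More precisely, one can factor $\det B=\big(\prod_{k<l}(x_l-x_k)\big)\,s(x_1,\dots,x_\mu)$, where $s$ is a symmetric (Schur) polynomial with nonnegative integer coefficients determined by the gaps $i_r-(r-1)$; since the Vandermonde factor is nonzero, it remains to show $s(x_1,\dots,x_\mu)\ne 0$ at the prescribed powers of $\alpha$.

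I expect this last step to be the main obstacle, and over a finite field it is genuine rather than routine, because a nonzero symmetric polynomial can vanish at a particular tuple of powers of $\alpha$. Indeed a $2\times 2$ minor on rows $i_k<i_l$ and columns $j_k,j_l$ equals, up to a unit, $x_l^{\,i_l-i_k}-x_k^{\,i_l-i_k}$, which is zero whenever $\ord\!\big(\alpha^{j_l-j_k}\big)=(q-1)/\gcd(q-1,\,j_l-j_k)$ divides $i_l-i_k$, and this can occur within the admissible range (the gap is at most $t-1\le\lfloor n/2\rfloor-1$ while $q-1\ge n$) as soon as $q-1$ is composite. A clean completion is therefore either to add a hypothesis that forces every ratio $x_l/x_k=\alpha^{j_l-j_k}$ to have order exceeding all exponent gaps — for instance $q-1$ prime, or a restriction on the column index set keeping $\gcd(q-1,\,j_l-j_k)$ small — under which $s$ cannot vanish and the induction above closes; or to replace the coefficient matrix~(\ref{eq:encodingscheme2}) by a super-regular coefficient matrix such as a Cauchy matrix, for which the Cauchy determinant formula renders every square submatrix nonsingular outright and the Schur-polynomial step is avoided altogether. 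Controlling $\ord(\alpha^{j_l-j_k})$ relative to the exponent gaps is, I believe, the crux of any argument that keeps the Vandermonde-type construction.
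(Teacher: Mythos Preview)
Your approach is genuinely different from the paper's. The paper argues by induction on $\mu$: it checks a $2\times2$ base case directly, assumes the $(\mu-1)\times(\mu-1)$ principal minor is nonsingular, then scales each row of $B$ by the inverse of its last-column entry to obtain a matrix $B'$ whose last column is all ones, and asserts that $B'$ has the form treated in Lemma~\ref{lemma:matrixfullrank}. Your route through the generalized Vandermonde/Schur factorization is cleaner and isolates exactly what must be verified.

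More importantly, you have correctly located the obstruction, and it is the \emph{same} gap the paper's argument falls into. After the paper's row scaling, the $(r,k)$ entry of $B'$ is $\alpha^{i_r(j_k-j_\mu)}$, which is still a generalized Vandermonde in the bases $\alpha^{j_k-j_\mu}$ with the original exponents $i_1,\dots,i_\mu$; it is \emph{not} the ordinary Vandermonde of Lemma~\ref{lemma:matrixfullrank} unless the $i_r$ happen to be consecutive, so the appeal to that lemma does not close the induction (and the inductive hypothesis on $B_{\mu-1}$ is never actually used). Your $2\times2$ analysis already bites the paper's base case: the paper derives $(\ell-1)(j-i)\equiv 0$ and declares a contradiction ``about the field order,'' but the relevant modulus is $q-1$, and the congruence can certainly hold when $q-1$ is composite and the two factors split it. Your diagnosis---that without an additional hypothesis such as $q-1$ prime, or a switch to a super-regular (e.g.\ Cauchy) coefficient matrix, the full-rank claim need not hold---is on target; neither your argument nor the paper's can be completed for the lemma as currently stated.
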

\begin{proof}
We proceed the proof by mathematical induction.
\begin{compactenum}[i)]
\item We first prove that any $2 \times 2$ sub-matrix of $B$ has a full rank. It means that for any four elements lie in the corner are not alike (do not share a common factor). Put differently, $i \neq j$ and $\ell \neq 1$,
\begin{eqnarray}
\left[
\begin{array}{cc}
\alpha^{i} &\alpha^{j}\\
\alpha^{\ell i}&\alpha^{\ell j}
 \end{array}\right]\end{eqnarray}

If we divide the second row by $\alpha^{(1-\ell) i}$, we obtain $\alpha^{i}$. Now assume by contradiction that
$\alpha^{(1-\ell) i} .\alpha^{\ell j}=\alpha^{ j} $. Or
$\alpha^{(1-\ell) i}=\alpha^{(1-\ell) j} \mod q $. Obviously, this contradicts  the fact that $\ell \neq 1$ and $i \neq j$.  In addition $(l-1)(j-i)=0 \mod q$ contradicts the fact about the field order. Hence, the result is a consequence.

\item
Now, assume the matrix
\begin{eqnarray}\label{eq:subencodingscheme1}
B_{\mu-1}=\left[
\begin{array}{ccccc}
\alpha^{i_{1}j_{1}} &\alpha^{i_{1}j_{2}} &\cdots &\alpha^{i_{1}j_{\mu-1}}\\
\alpha^{i_2j_{1}}&\alpha^{i_2j_{2}} &\cdots &\alpha^{i_2j_{\mu-1}}\\
\vdots& \vdots
&\ddots &\vdots\\ \alpha^{i_{\mu-1}j_{ 1}} &\alpha^{i_{\mu-1}j_{2}} &\cdots
&\alpha^{i_{\mu-1}j_{\mu-1}} \end{array}\right]\end{eqnarray}
has a full rank.
\item
We will add any arbitrary row and column to the matrix  $B_{\mu-1}$ to construct the matrix $B$.

\begin{eqnarray}\label{eq:subencodingscheme1}
B=\left[
\begin{array}{cccc|c}
\alpha^{i_{1}j_{1}} &\alpha^{i_{1}j_{2}} &\cdots &\alpha^{i_{1}j_{\mu-1}}&\alpha^{i_{1}j_{\mu}}\\
\alpha^{i_2j_{1}}&\alpha^{i_2j_{2}} &\cdots &\alpha^{i_2j_{\mu-1}}&\alpha^{i_2j_{\mu}}\\
\vdots& \vdots
&\ddots &\vdots\\
\alpha^{i_{\mu-1}j_{ 1}} &\alpha^{i_{\mu-1}j_{2}} &\cdots
&\alpha^{i_{\mu-1}j_{\mu-1}}&\alpha^{i_{\mu-1}j_{\mu}}\\  \hline \alpha^{i_{\mu}j_{ 1}} &\alpha^{i_{\mu}j_{2}} &\cdots
&\alpha^{i_{\mu}j_{\mu-1}}&\alpha^{i_{\mu}j_{\mu}} \end{array}\right]\end{eqnarray}

All elements in the last columns are different, also all elements in the last row are different. Since $\alpha^{i_ij_{j}}$ is an element in $\F_q$, it has a unique inverse. Therefore, we can divide every row in the matrix B by the element in the last column. Hence, we have

\begin{eqnarray}\label{eq:subencodingscheme1}
B'=\left[
\begin{array}{cccc|c}
\alpha^{i_{1}'j_{1}'} &\alpha^{i_{1}'j_{2}'} &\cdots &\alpha^{i_{1}'j_{\mu-1}'}&1\\
\alpha^{i_2'j_{1}'}&\alpha^{i_2'j_{2}'} &\cdots &\alpha^{i_2'j_{\mu-1}'}&1\\
\vdots& \vdots
&\ddots &\vdots\\
\alpha^{i_{\mu-1}'j_{1}'} &\alpha^{i_{\mu-1}'j_{2}'} &\cdots
&\alpha^{i_{\mu-1}'j_{\mu-1}'}&1\\  \hline \alpha^{i_{\mu}'j_{ 1}'} &\alpha^{i_{\mu}'j_{2}'} &\cdots
&\alpha^{i_{\mu}'j_{\mu-1}'}&1 \end{array}\right]\end{eqnarray}
\end{compactenum}
All powers of $\alpha$'s are taken module $q$. Furthermore, all elements in each row (or column) are pairwise disjoint. The matrix $B'$ is similar to the matrix shown in~(\ref{eq:matrix-t}). Using lemma~\ref{lemma:matrixfullrank}, the matrix $B'$ has a full rank given by $\mu$.
\end{proof}

\bigskip

\section{Conclusion}\label{sec:conclusion}

In this paper we demonstrated the encoding operations of network protection codes defined over finite fields. We derived a bound on the minimum field size required for choosing unique coefficients of data sent on the working paths. In addition we presented a scheme for differentiated services in cases of some working paths have high priorities in terms of bandwidth and capacity assignments.

\scriptsize
\bibliographystyle{plain}

\bibliographystyle{ieeetr}

\end{document}